\documentclass[letterpaper,twocolumn,10pt]{article}
\usepackage{usenix}

\usepackage{tikz}
\usepackage{amsmath}

\usepackage{filecontents}

\usepackage{pgfplots}
\usepackage{comment}
\usepackage[linewidth=0.5pt]{mdframed}
\usepackage{graphicx}
\usepackage{framed}
\usepackage{xspace}
\usepackage{xcolor}
\usepackage{tabularx}
\usepackage{multicol}
\usepackage{multirow}
\usepackage{float}
\usepackage{subcaption}
\usepackage{pifont}
\usepackage{pdfpages}
\usepackage{cite}
\usepackage{hyperref}
\usepackage{bm}
\usepackage{tabularx}
\usepackage{multirow}
\usepackage{booktabs}
\usepackage{amsmath}

\hypersetup{colorlinks=true,citecolor=blue,filecolor=blue,linkcolor=blue,urlcolor=blue}

\usepackage{colortbl}
\usepackage{enumitem}
\usepackage[override]{cmtt}
\usepackage{graphicx,color}
\usepackage{boxedminipage}
\usepackage{url}
\usepackage{caption}
\usepackage{amsmath,amsthm,amstext,amssymb,amsfonts,latexsym}
\usepackage{algorithm}
\usepackage[noend]{algpseudocode}
\usepackage{algorithmicx}
\usepackage{cleveref}
\usepackage{lipsum}
\usepackage{listings}
\ifnum\pdfshellescape>0
  \usepackage{minted}
\else
  \usepackage[verbatim]{minted}
\fi

\usepackage{threeparttable}

\usepackage{pgfplots}
\pgfplotsset{compat=1.18}
\usepgfplotslibrary{statistics}

\usetikzlibrary{positioning, fit, arrows.meta, shadows.blur}

\newcommand{\step}[2]{%
  \tikz[baseline=(text.base)]{
    \node[inner sep=1pt, font=\small, align=left] (text) {
      \begin{tikzpicture}[baseline=(num.base)]
        \node[draw, circle, fill=white, inner sep=1pt, font=\small] (num) {#1};
      \end{tikzpicture}\hspace{3pt}#2
    };
  }%
}

\newtheorem{definition}{Definition}
\newtheorem{protocol}{Protocol}
\newtheorem{theorem}{Theorem}

\renewcommand{\paragraph}[1]{\smallskip\noindent\textbf{#1}}

\newcommand{\zkzt}{\mathsf{Prezta}}
\newcommand{\name}{\textsc{Prezta}\xspace}

\newcommand{\gen}{\mathsf{Gen}}
\newcommand{\sign}{\mathsf{Sign}}
\newcommand{\idsign}{\mathsf{IDSign}}
\newcommand{\policysign}{\mathsf{PolicySign}}
\newcommand{\prove}{\mathsf{Prove}}

\newcommand{\verify}{\mathsf{Verify}}

\newcommand{\pk}{\mathsf{pk}}
\newcommand{\sk}{\mathsf{sk}}
\newcommand{\vk}{\mathsf{vk}}
\newcommand{\info}{\mathsf{info}}

\newcommand{\G}{\mathcal{G}}
\newcommand{\V}{\mathcal{V}}
\renewcommand{\P}{\mathcal{P}}

\newcommand{\A}{\mathcal{A}}

\newcounter{mytable}
\def\mytable{\begin{centering}\refstepcounter{mytable}}
\def\endmytable{\end{centering}}

\newcounter{myfig}
\def\myfig{\begin{centering}\refstepcounter{myfig}}
\def\endmyfig{\end{centering}}

\newif\ifcameraready
\camerareadyfalse

\newif\ifeprint
\eprinttrue

\newif\ifshowcomment
 \showcommenttrue
\showcommentfalse 

\ifshowcomment

\newcommand{\yupeng}[1]{{\color{red}[Yupeng: #1]}}
\newcommand{\zhongjing}[1]{{\color{blue}[Zhongjing: #1]}}
\newcommand{\osaid}[1]{{\color{purple}[Osaid: #1]}}
\newcommand{\nikita}[1]{{\color{green}[Nikita: #1]}}

\else

\newcommand{\yupeng}[1]{}
\newcommand{\zhongjing}[1]{}
\newcommand{\osaid}[1]{}
\newcommand{\nikita}[1]{}
\fi

\newtheorem*{rep@theorem}{\rep@title}
\newcommand{\newreptheorem}[2]{%
\newenvironment{rep#1}[1]{%
 \def\rep@title{#2 \ref{##1}}%
 \begin{rep@theorem}}%
 {\end{rep@theorem}}}
\newreptheorem{theorem}{Theorem}

\usepackage[available]{usenixbadges}

\begin{document}

\pagestyle{empty}

\date{}

\title{Prezta: Provable Remote Execution of Zero-Trust Authorization using SNARKs}

\author{
{\rm Zhongjing Wei*}\\
University of Illinois Urbana-Champaign\\
zwei26@illinois.edu
\and
{\rm Osaid Muhammad Ameer*}\\
University of Illinois Urbana-Champaign\\
oameer2@illinois.edu
\and
{\rm Nikita Borisov}\\
University of Illinois Urbana-Champaign\\
nikita@illinois.edu
\and
{\rm Yupeng Zhang}\\
University of Illinois Urbana-Champaign\\
zhangyp@illinois.edu
}

\maketitle

\renewcommand{\thefootnote}{\fnsymbol{footnote}}
\footnotetext[1]{* The authors contribute equally to this paper.}
\renewcommand{\thefootnote}{\arabic{footnote}}

\begin{abstract}
Modernizing the security of \emph{operational technology} systems that control critical infrastructure has become a pressing challenge. Because edge devices have limited capabilities,  modernization has relied on \emph{application gateways} that interface with identity management systems and enforce access policies. These gateways are powerful enough to perform complex authorization decisions and support zero-trust architectures but create major deployment and management burdens: they must be collocated with remote, distributed edge devices, kept up to date with security patches, and managed with minimal downtime.

We propose  Provable Remote Execution of Zero-Trust Authorization (\name), an architecture that eliminates these gateways by evaluating policies within a zero-knowledge virtual machine (zkVM) running on the client. The zkVM  produces a succinct proof of authorization that edge devices can verify efficiently,  extending the zero-trust security envelope to the edge. Policies and identity management schemes can evolve without updating edge devices.

To demonstrate the feasibility of \name, we implement a prototype, built using the RISC Zero zkVM, that supports XACML 3.0 policies and JWT identity claims. While zkVMs introduce substantial proof overhead, we mitigate this by compiling policies to Rust code and pre-compiling regular expressions. Combined with optimized signature verification and JWT parsing, these measures reduce prover time by more than an order of magnitude. Our compiler correctly implements 83\% of the XACML 3.0 conformance suite, with proof generation completing in tens of seconds on a desktop. Verification, by contrast, takes only tens of milliseconds—fast enough for even resource-constrained edge devices.


\end{abstract}

\section{Introduction}\label{sec:intro}

Operational technology (OT) networks~\cite{nist:sp800-82} provide a networked interface to physical devices, such as those used in manufacturing, energy systems, and various forms of infrastructure, from water treatment to traffic control. Devices in these networks typically have rudimentary built-in security protections, owing to their limited computational capabilities and upgrade cycles that can range to several decades~\cite{it-ot-convergence}. Yet the security of many of these systems is paramount due to the critical importance of the systems and infrastructure that they monitor and control~\cite{stuxnet,Lee2016UkrainianPowerGrid}. This has led to efforts to extend modern security practices and, in particular, zero-trust architectures~\cite{Rose2020_NIST_ZTA}, to OT networks~\cite{ztot1,ztot2,ztot3}.

A key tool in this effort has been the use of \emph{application gateways}. These gateways implement modern security practices, such as the use of sophisticated role- and attribute-based policies~\cite{rbac,abac}, and interfacing with identity management systems that can implement two-factor authentication. These gateways, however, become a critical component, and their failures or compromise can result in a loss of availability or a security breach, respectively. The network linking the gateways to OT devices is implicitly trusted, so in distributed infrastructure, many gateways are needed to collocate them with the OT devices. This creates a significant management problem, where gateways must be kept secured, patched against vulnerabilities, and available.  

\paragraph{Our Contributions.}
To mitigate this management problem, we propose an alternate architecture, \emph{Provable Remote Execution of Zero-Trust Authorization}, or \name\footnote{Our code is available at \url{https://github.com/walotta/ZK\_Zero\_Trust} and \url{https://github.com/osaidameer/xacml-to-rust}}. Rather than having gateways evaluate complex authorization policies, they are instead executed remotely (typically at the client) inside a zero-knowledge virtual machine (zkVM)~\cite{ben2013snarks,ben2014scalable,risc0}. The zkVM produces a succinct proof, also known as a SNARK, that the policy was correctly evaluated and that an access request is correctly authorized, which can be verified at a very low computational cost. This puts policy enforcement within reach of even low-capability OT devices. Moreover, the flexibility of zkVM means that new policies, policy types, and identity management systems, can all be introduced without any upgrades to the policy enforcement system.
We formalize the algorithms and the security definitions of \name, develop our construction, and prove its security based on the knowledge soundness of the SNARK and the unforgeability of the digital signature. See Section~\ref{sec:arch} for more details. Note that our architecture only utilizes the soundness and succinctness of SNARK, but not zero-knowledge. In fact, the term ``zkVM'' refers to a SNARK for a virtual machine not necessarily with zero-knowledge as a convention in the community.  


To demonstrate the feasibility of \name, we have developed a prototype implementation based on the Risc Zero zkVM~\cite{risc0}. Our prototype supports policies written in XACML~\cite{xacml3} and identity claims supported by a JSON Web Token (JWT)~\cite{jwt}. We compile the XACML policies directly to Rust programs that make policy decisions. The programs take as input an access request, a JWT, and context attributes, and produce a decision. The Risc Zero tools then compile and execute the programs inside the zkVM, producing a proof of authorization. The proof can then be used to verify that a request is authorized. The verification confirms that the correct policy program was executed on the correct inputs and produced the correct output. Our compiler supports most of the XACML 3.0 conformance suite~\cite{authzforce2025}. 

The use of zkVM to execute the policies does introduce significant overhead, and we therefore explored several optimizations in our implementation. We optimize regular expression evaluation by precompiling and serializing the corresponding DFA; we also simplify JWT parsing and use precompiled bigint operations for RSA verifications. Together, these reduce the zkVM execution time, needed to produce a SNARK, from about 20 minutes to under 30 seconds. The cost remains dominated by the RSA verification, so using state-of-the-art RSA verification circuits~\cite{woo2025efficient} the policy execution can be lowered to a median of 7 seconds on our test set. 

Our contributions are summarized below:
\begin{itemize}[leftmargin=*]
    \item We propose a new architecture, \name, for authorizations in OT systems utilizing SNARKs and digital signatures. Our architecture allows OT devices to support complex policies with limited computational resources, and to update policy and identification frameworks without software upgrades on the devices. 

    \item We formalize our architecture with security definitions, and prove that our construction is secure based on the security of SNARKs and digital signatures. 
    
    \item 
    We instantiate the SNARK with a zkVM~\cite{risc0}, and we build a compiler that automatically compiles policies in XACML to Rust, which is supported by the zkVM. Our compiler is able to support 323 out of 389 test cases in the XACML 3.0 conformance suite. 
    
    \item We propose/integrate several optimizations to improve the prover time of the zkVM for the authorization policies. We propose a new method to efficiently check regular expressions in zkVM. We also utilize an existing library for JWT parsing, and the precompiled instructions of \texttt{bigint} for RSA signature verification in zkVM. These optimizations improve the prover speed by nearly 50$\times$. 
    
    \item Finally, we fully implement \name and the end-to-end prover time of most policies in the dataset is around 28 seconds. The prover time can be potentially reduced to around 7 seconds with a special proof for RSA signature verification outside zkVM. With ongoing efforts to improve zkVM performance~\cite{ethproofs}, we believe the prover time can become very practical in the near future. 
\end{itemize}

\subsection{Related Work}

Zero-knowledge proofs have been long used for authentication in the context of anonymous credentials~\cite{brands2000rethinking,camenisch2001efficient}. These systems would allow a client to prove a predicate over attributes in their credential while remaining anonymous. Anonymous credentials have traditionally been implemented using custom zero-knowledge proofs which are relatively expensive to verify and lack the flexibility of our architecture. Several recent papers have used zkSNARKs to bootstrap anonymous credentials from traditional, identity-revealing ones~\cite{rosenberg2023zkcreds}, frequently targeting blockchain applications~\cite{baldimtsi2024zklogin,park2025beyond,aptos_keyless}. The focus is largely on identification, and supporting unlinkability, whereas our focus is on supporting flexible and evolvable authorization policies,  and not privacy, and our target is OT networks and devices. Another novel aspect of our work is the use of zkVMs for greater policy flexibility and ease of implementation, as compared  with circuit-based SNARKs in the previous work.

There is some early work on using zkVMs for authorization and authentication; Moser et al.~\cite{moser2025privacy} proposed a scheme using zkVM to verify digital signatures to authorize smart contract operations on blockchain, while Bonsai Pay~\cite{bonsai} is an application developed by Risc Zero to integrate OpenID with blockchain applications. These systems take a minute or longer to generate their proofs, highlighting the importance of our optimizations. A number of previous systems have proposed using a virtual machine to implement access restrictions~\cite{bakir2021caplets,borisov2002active,cao2024stateful}; their focus was largely on providing least-privilege in delegation contexts, rather than flexible policy implementation, and they used classic (non-verifiable) VMs, which would need to be run either at the end point or a trusted server, both of which are impractical in OT networks. 



\section{Preliminaries}\label{sec:prelim}

\subsection{Application Gateways}

\begin{figure*}[t!]
    \centering
    \resizebox{0.9\textwidth}{!}{
    \begin{tikzpicture}[
        node distance=2cm,
        block/.style={
            draw, rectangle, rounded corners,
            align=center, minimum width=2.5cm, minimum height=1cm,
            fill=blue!20, drop shadow,
        },
        arrow/.style={-Stealth, thick},
        label/.style={align=left},
        ]
        \node[block] (user) {Subject};
        \node[block] (idp) [above=2cm of user] {Identity\\Provider};
        \node[block] (pep) [right=3cm of user] {Policy Enforcement\\Point};
        \node[block] (engine) [above of=pep] {Policy\\Decision Point};
        \node[block] (device) [right=3cm of pep] {Edge\\ Device};
        \node[block] (admin) [right=3cm of engine] {Policy\\Authority};
        
        \node[draw, dashed, rounded corners, fit=(engine) (pep), inner sep=6pt] (appgw) {};
        \node[above=0.2cm of appgw, font=\small] {Application Gateway};
        
        \draw[arrow] (user) to[bend left] node[left, label] {\step{1}{Authentication}} (idp);
        \draw[arrow] (idp) to[bend left] node[right, label] {\step{2}{Auth Token}} (user);
        \draw[arrow] (user) -- node[below, label] {\step{3}{Request,\\Auth Token}} (pep);
        \draw[arrow] (pep)to[bend left] (engine); 
        \draw[arrow] (engine)to[bend left] (pep);
        \draw[arrow, double, double distance=1pt, draw=green!50!black] (pep) --
            node[below, label] {\step{4}{Authorized\\Request}}
            node[above, font=\scriptsize, yshift=2pt] {Implicitly trusted}
            (device);
        \draw[arrow, dashed] (admin) -- node[above, label] {Policy} (engine);
    \end{tikzpicture}
    }
    \caption{Zero Trust architecture}
    \label{fig:zero-trust}
\end{figure*}

Classically, operational technology (OT) environments relied on network isolation security as the primary security mechanism, with edge devices implementing rudimentary, if any, security protections. Recent efforts to modernize security practices have included the deployment of zero-trust architectures, with sophisticated identity management and policies, to OT environments, with the help of \emph{application gateways}. Such gateways  bridge the isolated OT network and enterprise networks and/or the Internet. Requests sent from outside the OT network are intercepted by the gateway and checked for correspondence with a policy before being forwarded to the device. The policy can be sophisticated, such as RBAC, ABAC, and NGAC, and requests can be authenticated using sophisticated and modern identity management (IdM) systems, including reliance on an identity provider, the use of two-factor authentication, etc. 


The gateway thus enables modern zero-trust security practices, but it becomes a new critical component in the OT environment, as all interaction with OT devices must happen via the gateway. As a result, gateways must typically be deployed at every site in an OT environment, both to ensure availability during periods of remote disconnection, and to avoid extending insecure OT networks beyond the premises. At the same time, the gateways are security-critical, since a vulnerability in a gateway allows unrestricted access to the edge devices it protects; e.g., Cesarano and Natella catalog many previously discovered vulnerabilities in application gateways~\cite{Cesarano-Natella}. Thus, application gateways must be carefully managed and kept up-to-date on security patches, while at the same time minimizing downtime due to potential upgrades~\cite{ICSPatch}. 



\subsection{Zero-Trust Authorization Flow}\label{subsec:ZT-auth-flow}

To discuss the authorization flow, we will be using the following terms. Most of these are adapted from the Zero-Trust Architecture specification~\cite{Rose2020_NIST_ZTA}, but slightly adapted for presentation in our system. 
\begin{itemize}[leftmargin=*]
  \item An \emph{Edge Device} is an OT component that performs or monitors a physical action. This could be a relay in an electric substation, a PLC controller for a manufacturing device, or a traffic sensor. 
  \item A \textbf{Subject} is an entity that needs to interact with the edge device by sending it \emph{requests}, such as a maintenance technician, or a monitoring service. We will typically assume that the subject is using 
  a COTS computer system to initiate its requests.
  \item An \textbf{Identity Provider (IdP)} authenticates subjects and assigns them identities, roles, or group membership, that can be used for authorization. The IdP issues the subject with an \textbf{authentication token} that can be used to prove their identity.
  \item A \textbf{Policy Authority (PA)} sets an authorization policy for the system and makes updates to it as needed.
  \item A \textbf{Policy Decision Point (PDP)} decides whether a request from a subject is authorized to perform a request.
  \item A \textbf{Policy Enforcement Point (PEP)} interacts with the PDP to determine whether a request from a subject is properly authorized and takes action to either permit the request to reach the edge device, or block it.
\end{itemize}


We describe the authorization flow in an application gateway, as shown in Figure~\ref{fig:zero-trust}. The PA specifies a policy and deploys it on the PDP, running inside the application gateway. The subject authenticates with the IdP and obtains a token, which is forwarded along with the request to the gateway. The PEP, also inside the gateway, communicates with the PDP and, based on the decision, either forwards the request through to the edge device or denies it. Note that beyond the PEP, no further authentication or authorizations are made, and the communication from the gateway is implicitly trusted by the edge device. Denied requests (or all requests) can be sent to an audit log (not shown); we discuss audit in \Cref{sec:audit}.   

The PDP makes use of the following information when making a decision:
\begin{itemize}[leftmargin=*]
    \item Subject attributes, e.g., usernames, email addresses, roles, etc., as authenticated by the identity provider
    \item Request attributes, e.g., requested operation, and parameters
    \item Edge device attributes, e.g., device type, location, or label
    \item Context, e.g., the current date and time, or the IP address of the requester
\end{itemize}

In some enterprise deployments of ZTA, the IdP and PDP are combined into a single \emph{identity and access manager} (IAM) which issues an \emph{authorization token} for a particular request. Involving an IAM at every request, however, is not appropriate for OT environments where delays and availability issues associated with IAM may prevent critical operation. Even the IdP need not be contacted with every request, as authentication tokens can be reused until their expiry. The expiration times are typically minutes to hours, but can be tuned to trade off the performance, availability, and security risks. 

\paragraph{JWT.} In this paper, we use the JSON Web Tokens (JWT) as the authentication token, but our scheme can be generalized to other formats. A JWT consists of three fields: \texttt{\textless Header\textgreater, \textless Payload\textgreater, \textless Signature\textgreater}. The header includes the algorithm of the digital signature (e.g.\ RSA-SHA256) and other meta-data. The payload includes the subject’s role, user-group, issued timestamp and expiration timestamp. The signature is a digital signature signing both the header and the payload in base64. We use RSA-2048 and SHA-256 in this paper. An example of JWT is provided in~\Cref{subsubsec:jwt}.

    
    

\subsection{Cryptographic Primitives}

\paragraph{Digital signature.} A digital signature scheme consists of the following algorithms: $\gen(1^\lambda)\to (\sk,\pk)$, $\sign(m,\sk)\to\sigma$, $\verify(m,\sigma,\pk)\to\{0,1\}$.
A signature scheme is correct if for all $(\sk,\pk)\gets\gen(1^\lambda)$, all $\sigma\gets\sign(m,\sk)$, $\verify(m,\sigma,\pk)=1$. 
It is unforgeable if for all PPT adversary $\A$, $(\sk,\pk)\gets\gen(1^\lambda)$, $(m^*,\sigma^*)\gets\A^{\sign(\cdot)}(\pk)$, $\Pr[\verify(m^*,\sigma^*,\pk)=1 \land m^*\notin Q]\le\mathsf{negl}(\lambda)$, where $Q$ is the set of messages queried to the signing oracle $\sign(\cdot)$. 

\paragraph{SNARK.}  A Succinct non-interactive argument of knowledge (SNARK) allows a prover to convince a verifier that a statement is true. It consists of three algorithms ($\mathcal{G}, \mathcal{P}, \mathcal{V}$), and satisfies completeness and soundness. See Appendix~\ref{app:prelim} for the formal definitions. In our construction, we instantiate the SNARK with a zkVM, which represents the relation by a program $P$. The statement is $y=P(x,w)$, denoting the output of running the program on the input and the witness. zkVM also supports the generation of $\vk_P$ for a program, and we abuse the notation to denote it as $\vk_P\gets\G(1^\lambda,P)$. In our construction, we use the zkVM of RISC Zero~\cite{risc0}, the backend of which is STARK~\cite{stark} with a transparent setup. Therefore, $\pk$ simply consists of a hash function and some public parameters that do not depend on $P$.

\section{Architecture Design}\label{sec:arch}
In this section, we present the security definitions and the architecture of our new design, \name. We also provide a generic construction using SNARKs and digital signatures. 

\begin{figure*}[t!]
    \centering
    \resizebox{0.68\textwidth}{!}{
    \begin{tikzpicture}[
        node distance=2cm,
        block/.style={
            draw, rectangle, rounded corners,
            align=center, minimum width=2.5cm, minimum height=1cm,
            fill=blue!20, drop shadow,
        },
        arrow/.style={-Stealth, thick},
        label/.style={align=left},
        ]
        \node[draw, rectangle, dashed, rounded corners, minimum width=5.5cm, minimum height=3.2cm] (user) {};
        \node[anchor=south, yshift=0cm] at (user.south) {Subject};
        \node[
            draw, rectangle, rounded corners, align=center,
            minimum width=4.5cm, minimum height=2.2cm, fill=blue!10
        ] (zkvm) at (user) {};
        \node[anchor=north, xshift=-1cm] at (zkvm.north) {zkVM};
        \node[
            draw, rectangle, rounded corners, align=center,
            minimum width=3cm, minimum height=1.2cm, fill=white
        ] (pdp) at (zkvm) {PDP};
        \node[block] (idp) [above=1.5cm of user] {Identity\\Provider};
        
        \node[block] (pep) [right=3cm of user] {PEP};
        \node[draw, rectangle, dashed, rounded corners, align=center, minimum width=3.5cm, minimum height=2.8cm] (device) at (pep) {};
        \node[anchor=south, yshift=0.1cm] at (device.south) {Edge Device};
        \node[block] (admin) [above=1.2cm of device] {Policy\\Authority};

        \draw[arrow] (user) to[bend left] node[left, label, yshift=0.5cm] {\step{1}{Authentication}} (idp);
        \draw[arrow] (idp) to[bend left] node[right, label, yshift=0.5cm] {\step{2}{Auth\\Token}} (pdp);
        \draw[arrow] (admin) -- node[below,label, xshift=0.5cm] {\step{3}{Signed Policy}} (pdp);
        \node[anchor=south] at (zkvm.south) {\step{4}{zkSNARK Generation}};
        
        \draw[arrow] (user) -- node[below, label] {\step{5}{Request,\\Context, Proof}} (device);

        \node[anchor=north, yshift=0.6cm] at (pep.north) {\step{6}{Verification}};
    \end{tikzpicture}
    }
    \caption{Our new \name architecture}
    \label{fig:zkzt}
\end{figure*}

\subsection{Security Model} 








\paragraph{Security Model.} We assume that the subject is untrusted and may be compromised by an adversary to make unauthorized access to the device. Both the IdP and the PA are assumed to be trusted and their public keys are publicly known. We do not consider adversaries that compromise the network communications. Network-level threats (e.g., man-in-the-middle, replay attacks) are defended against with orthogonal mechanisms and are out of the scope of this paper. 

Informally speaking, our \name architecture guarantees that the subject is granted the access of the device if and only if the subject possesses a valid token issued by the IdP, and the subject's attributes satisfy the policy authorized by the PA. 
We formally define \name below:



\begin{definition}\label{def:zkzt}
    A \name scheme is a tuple of algorithms:
    \begin{itemize}[leftmargin=*]
        \item $\zkzt.\gen(1^\lambda)\to (\sk_\mathsf{IdP}, \pk_\mathsf{IdP},\sk_\mathsf{PA}, \pk_\mathsf{PA}, \pk_\mathsf{SNARK})$: the algorithm takes the security parameter as input, and outputs the pairs of secret key and public key of IdP and PA, as well as the public parameters of the SNARK. They can be generated separately, but we combine their generations into one algorithm for simplicity. 
        
        \item $\zkzt.\idsign(\sk_\mathsf{IdP},\mathsf{att})\to \sigma_\mathsf{att}$: the algorithm is executed by the IdP. It takes the secret key of IdP and the attributes of the subject and outputs an authorization token.
        
        \item $\zkzt.\policysign(\sk_\mathsf{PA}, P)\to (\vk_P, \sigma_P)$: the algorithm is executed by the PA. It signs the access control policy of the edge device and outputs the verification key and the signature of the policy $P$. 
        
        \item $\zkzt.\prove(P, \mathsf{att}, Req, \info, \sigma_{P}, \sigma_\mathsf{att}, \pk_\mathsf{SNARK})\to \pi$: the algorithm is executed by the subject. For a request $Req$ to access the edge device, the algorithm takes as input the policy, the subject's attributes, the subject's token, the signature of the policy and other public information, and computes a proof. Here $\info$ denotes device attributes (e.g., type, location) and context (e.g., date, IP address). 
        
        \item $\zkzt.\verify(\pi, Req, \info, \vk_P, \sigma_P, \pk_\mathsf{PA})\to \{0,1\}$: upon receiving a request together with a proof, the system runs this verification algorithm and outputs 0 or 1. 
    \end{itemize} 

    \paragraph{Completeness.} A \name scheme is complete if for all $(\sk_\mathsf{IdP}, \pk_\mathsf{IdP},\sk_\mathsf{PA}, \pk_\mathsf{PA},\pk_\mathsf{SNARK})\gets\zkzt.\gen(1^\lambda)$,       $\sigma_\mathsf{att}\gets\zkzt.\idsign(\sk_\mathsf{IdP},\mathsf{att})$,   $(\vk_P, \sigma_P)\gets \zkzt.\policysign(\sk_\mathsf{PA}, P)$:
    \begin{equation*}    
        \Pr \left[ 
        \begin{tabular}{l}
        $P(\mathsf{att}, Req, \info) = 1 \land $\\
         $\pi\gets\zkzt.\prove(P, \mathsf{att}, Req, \info, $\\
         \hspace{7em}$\sigma_{P}, \sigma_\mathsf{att}, \pk_\mathsf{SNARK})$\\
         : $\zkzt.\verify(\pi, Req, \info, \vk_P, \sigma_P, \pk_\mathsf{PA})=1$
        \end{tabular}
        \right] =1, 
    \end{equation*}
    where $P(\mathsf{att}, Req, \info)$ denotes the output of the policy on the request, the subject's attributes and the public information. 

    \paragraph{Security.} A \name scheme is secure if for any PPT adversary $\A$, for any $(\sk_\mathsf{IdP}, \pk_\mathsf{IdP},\sk_\mathsf{PA}, \pk_\mathsf{PA},\pk_\mathsf{SNARK})\gets\zkzt.\gen(1^\lambda)$. the following probability is $\le \mathsf{negl}(\lambda)$:
    \begin{equation*}    
        \Pr \left[ 
        \begin{tabular}{l}
        $(Req,\mathsf{att}, \sigma_\mathsf{att},\pi, P,\vk_P,\sigma_P)\gets\A^{\mathsf{IDSign}(\cdot),\mathsf{PolicySign}(\cdot)}($\\ \hspace{11em}$\pk_\mathsf{IdP},\pk_\mathsf{PA},\pk_\mathsf{SNARK})$,\\
        
         : $(\zkzt.\verify(\pi, Req, \info, \vk_P, \sigma_P, \pk_\mathsf{PA})=1$\\
         and $P(\mathsf{att},Req,\info)\neq 1)$\\
         $\lor (\verify(\mathsf{att}, \sigma_\mathsf{att},\pk_\mathsf{IdP})= 1$ and $\mathsf{att}\notin Q_\mathsf{att})$\\
         $\lor (\verify(\vk_P, \sigma_P,\pk_\mathsf{PA})= 1$ and $\vk_P\notin Q_P)$.
        \end{tabular}
        \right]
    \end{equation*}
where $Q_\mathsf{att}$ is the set of messages queried to the signing oracle $\mathsf{IDSign}(\cdot)$, and $Q_P$ is the set of messages queried to $\mathsf{PolicySign}(\cdot)$.

\end{definition}

\paragraph{Unlinkability.} An additional property that can be achieved by our design is the unlinkability of the access requests. This can be enabled by the zero-knowledge property of the zkSNARK, and by changing the algorithms to hide some information in $Req$ and $\info$ such as the IP address of the subject. We primarily focus on the security of the \name scheme and leave the formal definition and construction of unlinkability as a future work.

\subsection{Our \name System}

\begin{figure}[t!]
\centering{\centering
\framebox{\parbox{.99\linewidth}{
\begin{protocol}[\name scheme]\label{prot:zkzt}
\text{   }
\begin{itemize}[leftmargin=*]
        \item $\zkzt.\gen(1^\lambda)\to (\sk_\mathsf{IdP}, \pk_\mathsf{IdP},\sk_\mathsf{PA}, \pk_\mathsf{PA}, \pk_\mathsf{SNARK})$
        \begin{enumerate}
            \item $(\sk_\mathsf{IdP}, \pk_\mathsf{IdP})\gets\mathsf{signature}.\gen(1^\lambda)$
            \item $(\sk_\mathsf{PA}, \pk_\mathsf{PA})\gets\mathsf{signature}.\gen(1^\lambda)$
            \item $\pk_\mathsf{SNARK}\gets\mathsf{SNARK}.\G(1^\lambda)$
        \end{enumerate}
        
        \item $\zkzt.\idsign(\sk_\mathsf{IdP},\mathsf{att})\to \sigma_\mathsf{att}$
        \begin{enumerate}
            \item $\sigma_\mathsf{att}\gets\sign(\mathsf{att},\sk_\mathsf{IdP})$
        \end{enumerate}
        
        \item $\zkzt.\policysign(\sk_\mathsf{PA}, P)\to (\vk_P, \sigma_P)$
        \begin{enumerate}
            \item $P$ hardcodes $\pk_\mathsf{IdP}$, or a list of acceptable public keys of IdPs. 
            \item $\vk_P\gets \mathsf{SNARK}.\G(1^\lambda, P)$
            \item $ \sigma_P\gets\sign(\vk_P,\sk_\mathsf{PA})$
        \end{enumerate}
        
        \item $\zkzt.\prove(P, \mathsf{att}, Req, \info, \sigma_{P}, \sigma_\mathsf{att}, \pk_\mathsf{SNARK})$ $\to \pi$
        \begin{enumerate}
            \item $\pi\gets\P(P, \mathsf{att},Req,\info,\sigma_\mathsf{att}, \pk_\mathsf{SNARK})$ where the statement of the SNARK is

            $\mathcal{R} = \{(P,Req,\info; \mathsf{att},\sigma_\mathsf{att}) | P(\mathsf{att},Req,\info)$ $ = 1 \land \mathsf{signature}.\verify(\mathsf{att},\sigma_\mathsf{att}, \pk_\mathsf{IdP})=1\}$.

        \end{enumerate}
        
        \item $\zkzt.\verify(\pi, Req, \info, \vk_P, \sigma_P, \pk_\mathsf{PA})\to \{0,1\}$

        \begin{enumerate}
            \item Obtain $\info$
            \item Check if $\mathsf{signature}.\verify(\vk_P,\sigma_{P},\pk_\mathsf{PA})=1$. This only needs to be checked once for a policy.
            \item Check if $\mathsf{SNARK}.\V(Req,\info, \vk_P, \pi)=1$
            \item Output 1 if all checks pass; otherwise, output 0.

        \end{enumerate}
\end{itemize}
\end{protocol}}}}

\end{figure}
Our new \name architecture is presented in~\Cref{fig:zkzt}, and the generic protocol of our scheme is presented in~\Cref{prot:zkzt}. It utilizes the SNARK scheme and the digital signature scheme as defined in~\Cref{sec:prelim}. As shown in the figure, we completely remove the gateway from the architecture, and utilize SNARK to enforce the policy. The overhead of policy evaluation and proof generation is shifted to the subject, thus addressing the update and maintenance issues of gateways and edge devices in the existing zero trust architecture (\Cref{fig:zero-trust}). In the protocol, we hardcode $\pk_\mathsf{IdP}$, or a list of acceptable public keys of IdPs, in the policy, which is signed by the PA. We can also make $\pk_\mathsf{IdP}$ as a public input known by the edge device, and there is no essential difference between the two choices. 

In Appendix~\ref{app:proof}, we provide a proof sketch for the following theorem:
\begin{theorem}\label{thm:main}
    \Cref{prot:zkzt} is a complete and secure \name scheme by~\Cref{def:zkzt}. 
\end{theorem}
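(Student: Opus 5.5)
Completeness and security are handled separately, and security is reduced to the unforgeability of the signature scheme and the knowledge soundness of the SNARK (Appendix~\ref{app:prelim}).

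\textbf{Completeness.} Fix honestly generated keys, $\sigma_\mathsf{att}\gets\zkzt.\idsign(\sk_\mathsf{IdP},\mathsf{att})$ and $(\vk_P,\sigma_P)\gets\zkzt.\policysign(\sk_\mathsf{PA},P)$, and assume $P(\mathsf{att},Req,\info)=1$. Signature correctness gives $\mathsf{signature}.\verify(\mathsf{att},\sigma_\mathsf{att},\pk_\mathsf{IdP})=1$, so $(P,Req,\info;\mathsf{att},\sigma_\mathsf{att})\in\mathcal{R}$. SNARK completeness then yields $\V(Req,\info,\vk_P,\pi)=1$ with probability $1$. Signature correctness for the PA gives $\mathsf{signature}.\verify(\vk_P,\sigma_P,\pk_\mathsf{PA})=1$, so every check in $\zkzt.\verify$ passes.

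\textbf{Security.} The adversary's winning event is a union of three events: $E_1$ (the proof verifies but $P(\mathsf{att},Req,\info)\neq 1$), $E_2$ (an IdP forgery on some $\mathsf{att}\notin Q_\mathsf{att}$), and $E_3$ (a PA forgery on some $\vk_P\notin Q_P$). I bound each by a negligible function and apply a union bound.

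For $E_2$, build a reduction $\mathcal{B}_2$ against the unforgeability of the IdP key. $\mathcal{B}_2$ receives $\pk_\mathsf{IdP}$ and generates $(\sk_\mathsf{PA},\pk_\mathsf{PA})$ and $\pk_\mathsf{SNARK}$ itself. It answers $\idsign$ queries with its own signing oracle and $\policysign$ queries using $\sk_\mathsf{PA}$. When $\A$ halts, it outputs $(\mathsf{att},\sigma_\mathsf{att})$. Its queries to the challenger are exactly $Q_\mathsf{att}$, so $\Pr[E_2]$ is bounded by its forging advantage. The reduction $\mathcal{B}_3$ for $E_3$ is symmetric: it embeds the challenge key as $\pk_\mathsf{PA}$ and outputs $(\vk_P,\sigma_P)$. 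Here the $\policysign$ oracle signs $\vk_P$, so $Q_P$ should be read as the set of signed verification keys $\vk_P$, which is the set the definition compares against.

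For $E_1$, I would first condition on $\neg E_3$. Then $\vk_P$ was output by a $\policysign$ query, so it equals $\G(1^\lambda,P')$ for some queried program $P'$ that hardcodes $\pk_\mathsf{IdP}$. The main obstacle is here. The definition lets $\A$ output an arbitrary $P$ together with $\vk_P$, while SNARK soundness speaks only about the program $P'$ actually bound to $\vk_P$. I would argue, and state as an explicit convention of the proof, that $\vk_P$ determines $P$. This holds because in a zkVM the verification key is a hash (image ID) of the program, so collision resistance forces $P=P'$ except with negligible probability.

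With $P=P'$ fixed, I run the knowledge extractor of the SNARK on $\A$. An accepting $\pi$ for statement $(P,Req,\info)$ then yields, except with negligible probability, a witness $(\mathsf{att}',\sigma')$ such that $P(\mathsf{att}',Req,\info)=1$ and $\verify(\mathsf{att}',\sigma',\pk_\mathsf{IdP})=1$. If $\mathsf{att}'\notin Q_\mathsf{att}$, this witness is an IdP forgery, which $\mathcal{B}_2$ can exploit as above. Otherwise $\mathsf{att}'$ is a legitimately issued attribute set satisfying the policy. Since the security game's predicate evaluates $P$ on the $\mathsf{att}$ that $\A$ outputs, I would interpret $\mathsf{att}$ as the attribute set bound to the proof, that is, the extracted witness. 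Under that reading $E_1$ can occur only with negligible probability. This interpretive step, together with the $\vk_P$-to-$P$ binding, is where I expect the proof to be a sketch rather than fully rigorous, and I would flag both assumptions. Summing the three negligible bounds completes the proof of Theorem~\ref{thm:main}.
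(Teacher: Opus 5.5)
Your proposal is correct and follows the paper's own sketch: completeness comes from signature correctness plus SNARK completeness, and security comes from SNARK knowledge extraction, unforgeability of the IdP and PA keys, and a union bound. The two identifications you flag (reading $\mathsf{att}$ as the extracted witness, and $P$ as the program behind the signed $\vk_P$) are exactly what the paper assumes silently when it names the extracted witness $(\mathsf{att},\sigma_\mathsf{att})$; note, though, that $\zkzt.\verify$ never checks $\vk_P=\G(1^\lambda,P)$, so it is your stated convention, not collision resistance by itself, that ties the adversary's output $P$ to $\vk_P$.
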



\section{System Design}\label{sec:system}

Following the generic construction in the previous section, we further present the detailed design of \name to make it concretely efficient in practice. 

We instantiate the zkSNARK with a zkVM. The primary motivation for adopting a zkVM rather than designing custom circuits is that the authorization logic involves several components with high implementation complexity, including regular-expression based string matching, JSON Web Token (JWT) parsing, and RSA signature verification. Constructing hand-optimized circuits for such components is cumbersome, error-prone, and expensive to maintain, especially when policy rules evolve frequently.

A zkVM allows us to directly reuse well-engineered library implementations of these components in Rust, maintaining high code readability, auditability, and testability. At the same time, with targeted optimizations on these components (detailed in Section~\ref{subsec:zkvm_optimizations}), the zkVM proving overhead is significantly reduced, making its proving cost comparable to that of tailored circuits for our authorization workloads. This design enables both practical deployment and security soundness without sacrificing performance.

Due to the use of zkVM, in particular RISC Zero~\cite{risc0}, the policy will be compiled to a Rust code at the first step; then the Rust code will be converted to machine code (RISC-V in our case. This machine code is called the IMAGE) and the zkVM module will also output a commitment of machine code (this commitment is called the IMAGE\_ID). The commitment serves as $\vk_P$ defined previously, and when performing the verification, the verifier will check if the proof is aligned with the commitment of machine code. This technology can ensure the code in zkVM is identical to the trusted policy logic approved by the policy authority, thus the subject cannot change the policy.

The \name system works in three main stages:
\begin{enumerate}[leftmargin=*]
    \item \textbf{Policy Deployment:} The policy admin compiles the authorization policy in XACML format into Rust code, which is further compiled into RISC-V machine code (IMAGE). A commitment to this machine code (IMAGE\_ID) is generated for later verification. Then the policy admin deploys the IMAGE\_ID to the edge device, which stores it for future access requests. Also the policy admin will share the policy code (IMAGE) with authorized subjects who may request access to resources protected by this policy.

    \item \textbf{Proof Generation:} When a subject requests access to a resource, they provide their attributes (e.g., JWT token, source IP address), environment attributes (e.g., current Unix timestamp) and request context (e.g., requested resource ID and request operation) to the zkVM along with the compiled policy IMAGE. The zkVM executes the policy logic using these attributes and generates a zero-knowledge proof attesting that the access decision is computed correctly according to the policy. Then the subject sends the access request and the generated proof to the edge device for verification.

    \item \textbf{Proof Verification:} The edge device receives the subject's access request and the generated proof. It first verifies that the proof corresponds to the committed machine code (IMAGE\_ID) ) stored locally during policy deployment. Then, it checks the authenticity of all public input attributes against trusted sources before accepting the proof and making the final access decision. These checks include consistency tests between subject-provided values and verifier-observable data when applicable.

\end{enumerate}

\subsection{Policy Compiler}\label{subsec:compiler}

To enable verifiable evaluation of XACML policies in constrained OT environments, we developed a policy compiler that translates XACML policies directly into Rust. The resulting programs take the request and context as input and output whether the source XACML policy would have permitted or denied access. The programs can then be compiled into RISC-V code that can produce a verifiable execution in the Risc Zero virtual machine. The entire flow is illustrated in Figure \ref{fig:pipeline}.

\subsubsection{XACML Background} Before diving into the compiler and its design, we first provide some background on XACML itself. The eXtensible Access Control Markup Language (XACML), is an OASIS standard which is designed to express fine-grained, attribute-based access control policies. XACML enables the combination of role-based permissions with attribute-based evaluation of requests, for example, characteristics of the subject (e.g. role, department), the resource (e.g file type, sensitivity level), the action (e.g. read, write) and the environment (e.g. time of day, network location). 

 At a high level, XACML policies consist of the following elements:
\begin{itemize}[leftmargin=*]
    \item \textbf{PolicySet}: A container that groups policies or other policy sets, defining how their decisions are combined.
    \item \textbf{Policy}: A container that groups rules and defines the overall structure of the authorization logic.
    \item \textbf{Rule}: This is the smallest unit of decision-making in a policy. It typically consists of an \emph{effect} (e.g. Permit or Deny) and logic determining when that effect applies.
    \item \textbf{Target}: This is a filtering mechanism inside a rule or policy that defines to which request a policy or rule applies. This enables quick filtering of applicable policies/rules.
    \item \textbf{Condition}: A boolean expression that evaluates to true for a rule to apply.
    \item \textbf{Combining Algorithm}: These are strategies for resolving conflicts when multiple rules, policies, or policy sets apply;  e.g., deny-overrides or permit-overrides.
\end{itemize}

Together, these components enable XACML to express complex authorization logic while remaining structured and declarative. Listing~\ref{lst:xacml-sample} shows a simplified XACML policy that permits physicians to read a specific patient record, demonstrating how targets and rules are structured in practice.\footnote{%
Our compiler works directly with XACML policies, but for ease of readability, we show policies using the ALFA syntax~\cite{ALFA}, which is much more compact than XACML. ALFA can be compiled directly to XACML using existing tools.}

\begin{listing}[!ht]
\begin{minted}[fontsize=\small, frame=none, breaklines, breakanywhere]{java}
namespace example.policies {
    policy A001_policy {
        apply deny-overrides;
        rule IIA003_rule permit {
            target
                subject.bogus == "Physician"
                and resource.resource_id ==
                    "http://medico.com/record/patient/BartSimpson"
                and (action.action_id == "read" or action.action_id == "write");
        }
    }
    // Attribute definitions
    // ...
}
\end{minted}
\caption{ALFA version of an XACML Policy}
\label{lst:xacml-sample}
\end{listing}




\begin{figure}[htbp]
\centering
\begin{tikzpicture}[node distance=1.2cm, auto]
    \tikzstyle{compile} = [draw, rectangle, align=center, fill=blue!10]
    \tikzstyle{runtime} = [draw, rectangle, align=center, fill=green!10]
    \tikzstyle{input} = [draw, rectangle, align=center, fill=yellow!20]
    \tikzstyle{output} = [draw, rectangle, align=center, fill=orange!20]
    
    \node[compile] (xacml) {XACML};
    \node[compile] (ir) [below=0.5cm of xacml] {Intermediate Representation};
    \node[compile] (code) [below=0.5cm of ir] {Rust Code};
    \node[compile] (compile) [below=0.5cm of code] {RISC-V Binary};
    
    \node[runtime] (execute) [below=0.8cm of compile] {Guest Program \\
    Execution};
    
    \node[input] (jwt) [left=0.65cm of execute, yshift=0.4cm] {JWT};
    \node[input] (request) [left=0.65cm of execute, yshift=-0.4cm] {Req. Attributes};

    \node[output] (decision) [right=0.65cm of execute, yshift=0.4cm] {Decision};
    \node[output] (req) [right=0.65cm of execute, yshift=-0.4cm] {Req. Attributes};
    
    \draw[->, thick] (xacml) -- (ir);
    \draw[->, thick] (ir) -- (code);
    \draw[->, thick] (code) -- (compile);
    \draw[->, thick] (compile) -- (execute);
    
    \draw[->, thick] (request.east) -- (execute.west |- request.east);
    \draw[->, thick] (jwt.east) -- (execute.west |- jwt.east);

    \draw[->, thick] (execute.east |- decision.west) -- (decision.west);
    \draw[->, thick] (execute.east |- req.west) -- (req.west);
    
    \node[draw=green!60, thick, dashed, fit={(execute)}, 
          inner sep=0.5cm, inner xsep=0.45cm, inner ysep=0.5cm,
          label={[anchor=south east]south east:{\small\textit{zkVM Environment}}}] {};
\end{tikzpicture}
\caption{End-to-end pipeline from XACML policy to verifiable execution in zkVM. Compilation stages (blue) transform the policy into executable code, while runtime stages (green) execute within the zkVM environment with request attributes and JWT (yellow) as inputs, with the decision and respective request attributes (orange) as output.}
\label{fig:pipeline}
\end{figure}


\subsubsection{Type System and Mapping} XACML's type system includes primitives (string, integer, boolean, double), temporal types (date, time, dateTime), durations (dayTimeDuration, yearMonthDuration) and other specialized types. Our compiler maps these to appropriate Rust types: strings map to String, integers to \texttt{i32} or \texttt{i64}, booleans to \texttt{bool}, and temporal types leverage existing crates (chrono for date/time types, iso8601\_duration for duration parsing). Multi-valued attributes (known as bags in XACML terminology) are represented as Rust vectors (e.g. \texttt{Vec<String>}, \texttt{Vec<i32>}).

\par The Input struct is generated dynamically based on all attributes referenced in the policy, with fields strongly typed according to their declared XACML data type. Type safety is currently enforced at compile time, where the Rust compiler validates that all comparisons and operations are type-correct, catching any inconsistencies introduced during code generation. At runtime, the host program is responsible for providing the correctly-typed inputs matching the policy's expectations. Since proof generation requires valid execution, malformed or mistyped inputs will simply fail to produce a valid proof. We do not currently perform type validation at the IR or policy level, assuming input XACML policies are semantically correct. 

\subsubsection{Expression and Code Generation} Logical expressions, including conditions and targets, are generated by recursively traversing the IR. Each operator is mapped to a handler that emits typed Rust code, with handlers recursively processing child nodes in a depth-first manner. This approach naturally mirrors the tree structure of logical expressions in XACML. 

\par Leaf nodes---attributes or constants---are converted into Rust expressions with appropriate type handling. Attribute references are transformed into field accesses on the input structure, with insertion of parsing logic where required for complex data types (e.g. temporal, durations). 

\par Static Jinja templates define the structure of Rule, Policy, and PolicySet functions. Using templates is a design choice that follows the natural structure and hierarchical nature of operations in XACML, resulting in verifiable, deterministic, and human-readable code. These templates function as boilerplate structures of XACML policies, while the expressions inside these structures are dynamically generated, dependent on the logic enclosed in the particular policy. 

\par Combining-algorithm logic is also embedded within these templates through conditional blocks. The Compiler supports all standard XACML 3.0 combining algorithms, such as \texttt{permit-overrides}, in which a single \emph{Permit} decision overrides the results of all sibling policies or rules; its counterpart, \texttt{deny-overrides}, where a single \emph{Deny} decision takes precedence; and \texttt{permit-unless-deny}, which yields an overall \emph{Permit} decision unless any child element evaluates to \emph{Deny}. These implementations follow advice from the XACML 3.0 specification document \cite{xacml3}, and exist as distinct code paths within the template, selected dynamically based on the policy's specified combining algorithm. The same template structure applies to both policy-level rule combination and policyset-level policy combination, with the exception of the \texttt{only-one-applicable} combining algorithm which only applies to \texttt{PolicySets}. Listing~\ref{lst:rust-generated} shows the Rust code generated from the sample policy in Listing~\ref{lst:xacml-sample}, illustrating how the hierarchical policy structure is translated into a series of evaluation functions. The complete code and policy are provided in Appendix \ref{app:codegen}.

\begin{listing}[!ht]
\begin{minted}[fontsize=\small, frame=none, breaklines, breakanywhere]{rust}
// {omitted imports and jwt functions}
fn evaluate_rule_target(inp: &Inputs) -> bool {
    (("Physician" == inp.access_subject_bogus)
        && ("http://medico.com/record/patient/BartSimpson" == inp.resource_resource_id)
        && (("read" == inp.action_action_id) || ("write" == inp.action_action_id)))
}
fn evaluate_rule(inp: &Inputs) -> Result {
    if !evaluate_rule_target(inp) {
        return Result::NotApplicable;
    }
    return Result::Permit;
}
fn evaluate_policy(inp: &Inputs) -> Result {
    // {omitted target evaluation + combining algorithm logic}
}
fn main() {
    let inp: Inputs = env::read();
    let mut decision = match evaluate_policy(&inp) {
        Result::Permit => true,
        _ => false,
    };
    // {omitted jwt verification}
    env::commit(&decision);
    env::commit(&inp);
}
\end{minted}
\caption{Rust code generated from policy defined in Listing \ref{lst:xacml-sample}, illustrating rule evaluation. Combining logic and auxiliary functions are omitted for brevity, complete code and policy are included in Appendix \ref{app:codegen}}
\label{lst:rust-generated}
\end{listing}

\subsubsection{Input/Output}\label{sec:i/o}
 XACML policies, while defining static values to be used in comparisons, also strictly define what fields and their respective data types are expected in the access request. The Compiler leverages this to dynamically define an input structure for each policy. This is accomplished by parsing the policy file, extracting all the \texttt{AttributeDesignator} elements (which specifies which attributes the policy expects to be present in a request) and their data types. The structure of the policy tree is analyzed to determine whether an attribute could be multi-valued (a bag in XACML terminology), in which case it is defined as a vector in the generated Rust code. This approach ensures that the input structure precisely matches the policy's requirements, allowing a request to be read by a host program and passed directly as input to the guest policy code. 

 \par Alongside establishing the request structure, the policy evaluator must also validate that the request originates from an authenticated source. Since our architecture removes traditional components such as the Policy Enforcement Point or Application Gateways, we must supplement our policy evaluator with request-validation capabilities. Each policy program accepts a JWT (our chosen authentication token), which acts as the source of subject attributes as defined in Section \ref{subsec:ZT-auth-flow}, decodes the contents, verifies its signature, and ensures that the subject from the token matches the subject of the request. In our current prototype, the public key of the Identity Provider, used for JWT verification, is embedded into the policy program at compile time. 
 
 \par The decision output is simplified to a binary outcome, where a regular \texttt{Permit} maps to true, while \texttt{NotApplicable}, \texttt{Deny}, and \texttt{Indeterminate} map to false. This simplification is intentional, as our architecture is primarily concerned with whether requests are correctly authorized or denied according to the policy. While \texttt{NotApplicable} and \texttt{Indeterminate} represent aspects of policy evaluation logic rather than explicit access outcomes, their distinction is less critical to our binary authorization model. Nevertheless, the compiler does retain support for XACML's extended result functionality should more granular decision information be needed in future work, and to support the correct implementation of combining algorithms. 

\subsubsection{Validation}\label{subsubsec:compiler_exp} We validate our compiler using the XACML 3.0 conformance test suite \cite{authzforce2025}, which contains 397 mandatory test cases. Each test case in the suite contains three files: policy.xml, request.xml, and response.xml (with a few exceptions). From these 397, 8 lack associated request files or have multiple policy files (linked through \texttt{PolicyIdReference} or \texttt{PolicySetIdReference}, an unsupported feature), leaving 389 testable cases. After filtering for unsupported features,  we evaluate our Compiler on 323 policies. Since the test suite contains the appropriate responses for the specific request input, we can sufficiently validate our compiler implementation. 

\par Our validation process is fully automated, for each test case, both the request and expected response are parsed into JSON, the policy is compiled to Rust, which is finally executed with the test request as input and the resulting decision is compared against the expected response. We pre-process the the policy results to map to the true and false binary output we expect, as mentioned in Section \ref{sec:i/o}. All 323 evaluated test cases pass successfully, demonstrating that the compiler correctly implements XACML semantics for the supported feature set. 


\subsubsection{Limitations and Scope} Our compiler implements most XACML functions---numeric, logical, string, set operations, and regular expressions. We did not implement certain functionality, such as XPath queries, RFC 822 names, or duration arithmetic, since our goal was a prototype compiler that demonstrates feasibility rather than one that achieves full XACML conformance. Nevertheless, the compiler supports a large majority of the conformance suite. A full list of unsupported XACML functions is provided in Appendix~\ref{app:unsupported} for reference.
We note that our compiler must be re-run every time an XACML policy is updated to generate new Rust policy code; however, this introduces negligible overhead, since its execution is typically well under a second. 

\subsection{zkVM and optimizations}\label{subsec:zkvm_optimizations}



We present the zkVM design and optimizations to make \name efficient in this section. 

In \name, the edge device cannot rely solely on a valid SNARK proof for authorization. The prover (i.e., the subject) may generate a valid proof using forged or manipulated input attributes. Therefore, the edge device must validate the authenticity of all public inputs before accepting the proof. To illustrate how different input attributes are treated in the \name system, we classify them based on their visibility and required real-world binding, as summarized in Table~\ref{tab:input_attributes}. 


\begin{table}[t]
\centering
\caption{Example attribute visibility and trusted binding source}
\label{tab:input_attributes}
\begin{threeparttable}

{\small
\begin{tabular}{lcc}
\toprule
\textbf{Attribute} & \textbf{Visibility} & \textbf{Binding Source} \\
\midrule
JWT & Private & -\tnote{1} \\
JWT public key & Public & Committed with zkVM code \\
subject IP address & Public & Network stack \\
Requested resource ID & Public & Request context \\
System timestamp & Public & System clock \\
\bottomrule
\end{tabular}
}

\begin{tablenotes}
\footnotesize
\item[1] Private attributes are validated implicitly inside the zkVM (e.g., JWT claims verified using signature).
\end{tablenotes}

\end{threeparttable}
\end{table}




As we will demonstrate in the experiments, naively implementing the policies in zkVM would introduce a high overhead on the prover time. We introduce several optimizations to reduce the overhead significantly. 

\subsubsection{Regex} 
The first type of function with a high overhead is regular expressions. It appears in some of the policies to perform string match on the username. Listing~\ref{lst:regex_naive} shows an example from case \textit{IIC057} in the XACML 3.0 conformance test suite to check the regular expression of \texttt{J.* K.* Hibbert}. The code was compiled naively into Rust  using the Regex library. It executes the full regex engine to parse and match this pattern at the runtime.

\begin{listing}[!ht]
\begin{minted}[fontsize=\small, frame=none]{rust}
use regex::Regex;
let is_match = Regex::new(r"J.* K.* Hibbert")
                  .unwrap()
                  .is_match("Julius K. Hibbert");
\end{minted}
\caption{Default regex matching in Rust}
\label{lst:regex_naive}
\end{listing}

Running a generic Rust regex engine inside a zkVM significantly increases the prover time due to the heavy Nondeterministic/Deterministic Finite Automaton (NFA/DFA) construction and state transitions performed at runtime. In particular, the dynamic compilation of arbitrary user-provided patterns results in a substantial growth of the number of CPU cycles because the entire automaton evaluation logic must be represented in the zkVM. 

In particular, when running the example in Listing~\ref{lst:regex_naive}, the naive approach performs the following steps at runtime:
\begin{enumerate}[leftmargin=*]
    \item Parse the user-supplied regular expression into an abstract syntax tree (AST);
    \item Translate the AST into an intermediate automaton representation (typically a Thompson NFA);
    \item Execute the NFA (or construct a DFA) dynamically during pattern matching.
\end{enumerate}
When executed inside a zkVM, this entire workflow is recorded in the execution trace, including parsing, automaton construction, and state transitions. Since every step must be represented by arithmetic constraints in the proof system, dynamic pattern compilation leads to substantial constraint growth and proving overhead. 

\paragraph{Our approach.} In \name, the policy is statically known at the time of deployment. Therefore, regexes can be precompiled into deterministic finite automata (DFA), and the resulting transition tables can be embedded directly into the zkVM program as serialized binary artifacts. During proof generation, the zkVM performs only table-driven state transitions, eliminating both pattern parsing and automaton construction overhead.
In particular, the workflow consists of three stages:
\begin{enumerate}[leftmargin=*]

\item \textbf{Policy Compilation (on policy admin side)}:  
   Each regex pattern appearing in the XACML policy (e.g., \texttt{J.* K.* Hibbert}) is compiled into a deterministic finite automaton (DFA) using the \texttt{regex\_automata} library. The DFA is then serialized into a compact byte array.

\item \textbf{Program Embedding}:  
   The serialized DFA byte sequence is embedded as a constant artifact within the zkVM program (e.g., linked as a static data segment). No regex parsing or automaton construction is needed during execution.

\item \textbf{Proof Generation (in zkVM)}:  
   During execution, the zkVM simply reconstructs the DFA from the serialized bytes and performs table-driven transitions over the input string. Pattern parsing and NFA/DFA construction are eliminated.
\end{enumerate}

\begin{listing}[!ht]
\begin{minted}[fontsize=\small, frame=none]{rust}
use regex_automata::{
    dfa::dense::DFA,
    nfa::thompson,
};
pub fn create_dfa_bytes(pattern: &str) -> 
    Result<Vec<u8>, Box<dyn std::error::Error>> {
    // 1. Build Thompson NFA
    let nfa = thompson::NFA::compiler()
                              .build(pattern)?;
    // 2. Determinize to a dense DFA
    let dfa = DFA::builder()
                    .build_from_nfa(&nfa)?;
    // 3. Serialize DFA to a little-endian byte
    #[cfg(target_endian = "little")]
    let (bytes, pad) = 
        dfa.to_bytes_little_endian();
    assert_eq!(pad, 0);
    Ok(bytes.to_vec())
}
\end{minted}
\caption{DFA precompilation using \texttt{regex\_automata}}
\label{lst:dfa_compile}
\end{listing}
\paragraph{Implementation of DFA precompilation.} We use the \texttt{regex\_automata} library's Thompson compiler and dense DFA builder to generate precompiled transition tables. The core Rust code is shown in Listing~\ref{lst:dfa_compile}.
During policy deployment, each regex pattern is passed to \texttt{create\_dfa\_bytes}, and the resulting byte vector is stored as a binary artifact (e.g., a \texttt{.dfa} file). For the pattern \texttt{J.* K.* Hibbert}, the deployed policy includes a serialized DFA encoding equivalent semantics.

\paragraph{DFA reconstruction and evaluation in zkVM. } Inside the zkVM, the DFA is reconstructed from its serialized form and used for deterministic, table-driven evaluation. Listing~\ref{lst:dfa_eval} illustrates this process. Compared to the naive implementation in Listing~\ref{lst:regex_naive}, this design achieves:

\begin{listing}[!ht]
\begin{minted}[fontsize=\small, frame=none]{rust}
use regex_automata::dfa::dense::DFA;
// Statically embedded DFA artifact
static DFA_BYTES: &[u8] = 
              include_bytes!("J_K_Hibbert.dfa");
pub fn eval_name_attr(input: &str) -> bool {
    // Reconstruct DFA from bytes, no parsing
    let dfa = DFA::from_bytes(DFA_BYTES)
                    .unwrap();
    // Table-driven pattern matching
    return dfa.try_search_fwd(&input);
}
\end{minted}
\caption{Table-driven DFA evaluation inside zkVM}
\label{lst:dfa_eval}
\end{listing}

\begin{itemize}[leftmargin=*]
    \item \textbf{No runtime parsing or construction overhead}:  
   All regex parsing, AST construction, and DFA determinization occur in the trusted pre-deployment phase. The zkVM only executes deterministic transitions.
   \item \textbf{Reduced trace and constraint complexity}:  
   Matching becomes a fixed table lookup operation. The number of RISC-V instructions and committed memory reads scales linearly with input length and is decoupled from the complexity of the regular expression, significantly reducing the overhead.
   \item \textbf{Clear trust and correctness boundaries}:  
   Since the policies are compiled by a trusted policy authority, we assume semantic equivalence between the regular expression in the original XACML and its compiled DFA. The zkVM only needs to prove correct evaluation over a known DFA, not correctness of the compilation itself.
\end{itemize}

This optimization reduces the number of executed RISC-V instructions and minimizes the volume of committed memory reads. As a result, it leads to fewer constraints and faster prover time, while retaining full correctness guarantees for regex-based attribute validation.



\subsubsection{JWT Parsing}\label{subsubsec:jwt}

Another common gadget in the policies is the JWT parsing. Listing~\ref{lst:jwt-example} shows an example of JWT. 

\begin{listing}[!ht]
\begin{minted}[fontsize=\small, frame=none]{python}
header = {
  "alg": "RS256",
  "typ": "JWT",
  "kid": key_id,
}
payload = {
  "iss": "https://login.example.com/",
  "subject_id": "Julius Hibbert",
  "aud": "api://payments-service",
  "exp": "<current_time> + 3600",
  "iat": "<current_time>",
  "auth_time": "<current_time> - 100",
  "email": "user@example.com",
  "nonce": "3e4f0f67-bc5a-413d-b528-93fd1c71fd4e",
  "roles": "admin",
}
\end{minted}
\caption{Example JWT used in \name.}
\label{lst:jwt-example}
\end{listing}


Unlike prior works, such as zkLogin~\cite{baldimtsi2024zklogin}, that develop special protocols verifying the JWT parsing with auxiliary input provided by the prover, we directly use the \texttt{serde\_json} library in Rust. This is because we observe that implementing the special protocol in zkVM incurs a similar overhead to using the \texttt{serde\_json} library, for common JWT tokens with no more than 10 fields in the payload. Moreover, directly using the \texttt{serde\_json} library avoids the vulnerability of the prover lying about the structure of the JWT token by injecting special characters in some fields of the payload.

\subsubsection{RSA Verification}\label{subsubsec:rsa}
Finally, another major overhead comes from the signature verification of the JWT. The RSA verification involves the SHA-2 hashing of the \texttt{header || '.' || payload} string and modular exponentiations over the large RSA group. Naively implementing them in zkVM would result in the prover time of more than 1,000 seconds. 

In \name, we utilize the pre-compiled system calls in the RISC Zero zkVM, including SHA-256 and big-integer modular arithmetic (\texttt{sha2} and \texttt{rsa} in~\cite{risc0_precompile}), to improve the efficiency. These pre-compiled modules are verified and committed implementations of cryptographic primitives that are included in the zkVM's instruction set. Their correctness can be verified publicly by the PA, and thus using these pre-compiled instructions do not alter the soundness of the proof.  
Additionally, we avoid the dynamic construction of RSA public key parameters. Instead of decoding the base64-encoded modulus and exponent in zkVM, the public key is hardcoded in the policy as a static field in the program's read-only memory. This removes expensive operations and big-integer initializations. 

As we will show in the experiments, although this optimization improves the prover time of RSA verification by one to two orders of magnitude, its overhead is still quite high in practice and is the bottleneck of the prover. In the literature, there are optimized circuits/R1CS constraints~\cite{kosba2018xjsnark,circom_rsa} and special protocols~\cite{woo2025efficient} for RSA verification. It merely takes around 2 seconds to generate a proof using Groth16~\cite{Groth2016}, and the prover time would be another order of magnitude faster using the SNARK backend of RISC Zero. In principle, we could utilize them through a pre-compiled instruction, or through a proof composition using recursive SNARKs to combine the proof for RSA verification with the zkVM proof. Unfortunately, we were not able to integrate them to RISC Zero. We believe this is only a gap of the implementation, but not any fundamental issue of our scheme.



\section{Experiments}\label{sec:exp}



\begin{figure*}[t!]
    \centering
    \resizebox{0.95\linewidth}{!}{\input{Figs/end2end/usenix/stacked_bar.tex}}
    \caption{Prover time and cycles of all policies.}
    \label{fig:end2end-cases-dual-sorted-user}
\end{figure*}

\begin{figure*}[t!]
    \centering
    \resizebox{0.95\linewidth}{!}{\input{Figs/end2end/usenix_no_rsa/stacked_bar.tex}}
    \caption{Prover time and cycles of all policies. (without RSA verification)}
    \label{fig:end2end-cases-dual-sorted-user-norsa}
\end{figure*}

We have fully implemented \name, and we report the empirical evaluations in this section. 

\paragraph{Settings.} We tested the prover and the verifier on an AWS c7i.4xlarge instance with 16 vCPUs of Intel Xeon Scalable and 32 GB of memory. The prover and the verifier are both implemented in Rust and compiled with \texttt{cargo} using the \texttt{--release} flag to enable optimizations. We use the RISC Zero zkVM v3.0.3~\cite{risc0} to instantiate our SNARK. It has a transparent setup, and the security relies on collision-resistant hash functions and the Fiat-Shamir transformation.


\subsection{End-to-end Performance}

\paragraph{Dataset, compilation and verification key generation.} As presented in Section~\ref{subsubsec:compiler_exp}, we test \name on the XACML 3.0 conformance test suite~\cite{authzforce2025}, and our compiler supports 323 out of 389 testable policies. On average, it takes around 0.24 seconds to compile a policy from XACML to Rust, while it takes 1.73s to compile the policy from Rust to RISC-V and also generate the verification key of the policy in zkVM. This part is done only once for each policy by the PA, and thus the overhead is very small in practice.

\paragraph{Prover time.} The prover time of every policy \name supports is presented in Figure~\ref{fig:end2end-cases-dual-sorted-user}. As shown in the figure, the prover time is either around 14 seconds or around 28 seconds. This is due to the padding of RISC Zero zkVM. To provide a better understanding, we also plot the number of \emph{total cycles} before padding for each policy reported by RISC Zero, and the figure is sorted by total cycles. The total cycles consist of \emph{user cycles} (the number of cycles executed by the main Rust code),
and additional instructions inserted by RISC Zero that are required for the zkVM. The total number of cycles is then padded to the nearest power of 2. Therefore, for those policies with fewer than 130K total cycles before padding, they are padded to $2^{17} = 131,072$ cycles, while others are padded to $2^{18} = 262,144$ cycles. The same number of cycles after padding does not result in exactly the same prover time because of different instructions used in different policies, but they are highly correlated. We also tested an empty Rust code in RISC Zero, and it results in $2^{15}=32,768$ cycles after padding with the prover time of 3.8 seconds. There are around 20K paging cycles and 9K reserved cycles in the padding for initializing the memory and setting up the zkVM.

To demonstrate the potential of \name, we remove the RSA verification module from the zkVM and report the prover time in Figure~\ref{fig:end2end-cases-dual-sorted-user-norsa}. This is because as explained in~\Cref{subsubsec:rsa}, we envision that optimized circuits or special protocols for RSA verification should be integrated in the future, eliminating the overhead of this module in zkVM. As shown in the figure, the prover time for 88.6\% of the policies drops to only 7 seconds. It shows that the main bottleneck of the end-to-end prover time in \name comes from the RSA verification and the padding of RISC Zero. In fact, the number of user cycles for the majority of policies is less than 20K, which is even smaller than the additional padded cycles of RISC Zero for initialiation. With the recent improvement of zkVMs~\cite{ethproofs} with hardware accelerations, our scheme can be made practical in the near future. We further investigated the cases with more than 100K user cycles in Appendix~\ref{app:exp}.


\paragraph{Proof size and verifier time.} The proof size of all policies are between 238KB to 250KB, and the verifier time are between 14 and 15.5 milliseconds. The verifier time is very practical for edge devices in practice. The proof size can be further compressed to 256 bytes via recursive SNARKs (e.g., Groth16~\cite{Groth2016}), with an additional overhead on the prover time. 

\subsection{Ablation Study and Micro-benchmarks}

To better understand the improvement of \name over naive approaches, we present an ablation study and micro-benchmarks in this section. 

We take the policy with the largest number of user cycles, implement it in RISC Zero naively without any optimizations, and then add each of our optimizations one by one cumulatively. Table~\ref{tab:ablation} shows the ablation study. It would take 1,301.7 seconds to generate a proof in the naive approach without any optimizations, while it only takes 27.3 seconds in \name, which is 47.7$\times$ faster. The largest improvement comes from the RSA verification, which can be further improved by a dedicated circuit/R1CS outside zkVM, or by special protocols such as~\cite{woo2025efficient}. Moreover, the regex optimization targets a major source of user cycles, significantly reducing user cycles and lowering prover time from 176s to 27.3s.

\begin{table}[t]
    \centering
    \caption{Ablation study of Prezta optimizations. Each row denotes adding the optimization in Column 1 cumulatively. }
    \label{tab:ablation}
    {\small
    \begin{tabular}{lcccc}
        \toprule
        Optimization & User & Prover & Verifier& Proof\\
         & cycles & time (s) &time (ms) & size (KB) \\
        \midrule
        None & 10{,}521{,}257 & 1301.7 & 196.9 & 3236 \\
        RSA & 805{,}388 & 176 & 32.8 & 536 \\
        Regex & 153{,}677 & 27.3 & 15.1 & 250 \\
        \bottomrule
    \end{tabular}
    }
\end{table}
\begin{table}[t]
    \centering
    \caption{Breakdown of user cycles by major functions.}
    \label{tab:user-cycles-breakdown}
    {\small
    \begin{tabular}{lcccc}
        \toprule
         & Others & Regex & RSA \\
        \midrule
        User cycles & 13{,}508 & 71{,}429 & 68{,}598 \\
        \bottomrule
    \end{tabular}
    }
\end{table}

In Table~\ref{tab:user-cycles-breakdown}, we also provide a breakdown of \name for the same policy in terms of user cycles. As shown in the table, the RSA verification contributes to 45\% of the user cycles and our optimized Regex contributes to 47\%\footnote{The sum of cycles does not equal to the last row of Table~\ref{tab:ablation} because RISC Zero also inserts different number of user cycles when we implement each module individually.}.

\paragraph{Comparison to Reef~\cite{angel2024reef}.} In Appendix~\ref{app:com_reef}, we further compare the performance of our regex-focused optimizations with Reef~\cite{angel2024reef}, a SNARK system tailored for regular expressions.

\section{Discussion}

We now address the practical considerations that arise when deploying \name, including update mechanics, proof generation models, and auditability.

\subsection{Updates}
\label{sec:updates}
One important feature of the architecture is the ease of updates. To update a policy, the policy authority needs to compile it from XACML to Rust and then use RISC Zero to obtain a new \texttt{IMAGE\_ID}. This ID is then signed and distributed to edge devices (perhaps piggybacked on the requests). To ensure the most recent policy is used, the signed \texttt{IMAGE\_ID}s should be accompanied by an expiration time; more speedy revocation could be supported by using a protocol similar to OCSP~\cite{rfc6960} to certify that a policy is still in effect. Specifically, a revocation service tracks whether a policy has been revoked and, when queried by a client, signs a status message indicating that status. The client includes this status message as another input to the policy evaluation program, much like the JWT from the IdP. The program then checks the signature, revocation status, and freshness against the current timestamp (\Cref{tab:input_attributes}). The edge device remains agnostic to revocation checks. As with OCSP, the policy authority may operate the revocation service directly or delegate it to another server trusted only to track status. Because these checks are embedded in the PA-signed policy, \name can vary parameters such as freshness requirements, potentially conditioned on other attributes such as request sensitivity.

We note that some of the logic of verifying update signatures, expirations, and revocation status, could itself be encoded into a separate (long-lived) zkSNARK or zkVM program to allow updates without any specific code on the edge device. 

Due to the flexibility of the zkVM, more significant changes are possible, all without device upgrades. For example, to support a new policy language, such as Rego~\cite{rego_language} or NGAC~\cite{ngac}, it would be necessary to build a new compiler that translates such policies into Rust, but from the point of view of the edge device, the new programs would be equivalent to a policy update. One could also switch identity management systems, e.g., from JWT to SAML~\cite{oasis_saml_2005}, or from RSA signatures on tokens used in our prototype to ECDSA or others. Again, since the edge device does not interact with authentication tokens, it would remain agnostic to this change. (Unless new precompiles were necessary to efficiently verify the signature, in which case the verifier \emph{would} need to be updated.)

In fact, it would be possible to create a post-quantum secure version of \name.  The proof algorithm used by RISC Zero~\cite{bensasson_et_al:LIPIcs.ITCS.2020.5} (see~\cite{Bruestle2023RISCZero} for details) is hash-based and thus not vulnerable to quantum attacks. To render the entire system post-quantum secure, it would be necessary to use a post-quantum signature algorithm for the signing of policy \texttt{IMAGE\_ID}s and the credentials (JWTs) in the identity system. The latter signature must be verified inside the zkVM, so a version of SPHINCS+~\cite{sphincsplus2019} instantiated with Poseidon~\cite{grassi2021poseidon} would be a good candidate. 

Verifier upgrades would be needed for any updates to the zkVM verifier, and these would become critical if these addressed a soundness error in the proof system. The field of zkVMs is rapidly maturing and we expect that a robust, production-ready zkVM will be available in the near future.

\subsection{Proof Outsourcing and Composition}

The resources needed to create a proof may be prohibitive for low-power devices, such as phones or monitoring terminals. In such cases, proof generation can be outsourced to either a local server or a cloud service; this is a strategy adopted by, e.g., zkLogin~\cite{baldimtsi2024zklogin}. Importantly, this service would not be a trusted component---its compromise would at worst obtain a log of attempted accesses, but would not enable unauthorized access, due to the soundness of zkSNARKs.\footnote{%
One has to be somewhat careful with supplying JWTs to an outsourced prover, as they serve as bearer certificates and could therefore be abused by an outsourced prover. Again, following a strategy from zkLogin, we can bind the JWT by having it including a hash of the request in the nonce, which could then be verified by the zkSNARK.}%
Full access privacy could be achieved by using a committee of servers using recent collaborative zkSNARKs~\cite{ozdemir2022experimenting,garg2023zksaas}. A powerful cloud proving server with GPUs could also significantly reduce the proving latency. 

RISC Zero supports \emph{proof composition}, where one zkVM program can verify another’s execution. This could be used to support modular policy programs, with parts of the policy delegated to other programs (or even other administrators), and would create potential reuse of proof components to optimize performance. We leave a full exploration of proof composition in this context to future work.

\subsection{Auditing}
\label{sec:audit}

To support audits, policies could require subjects to register a log record at a logging server and supply a proof of inclusion with the access request, which would be verified by the policy program. The log server could be made transparent and append-only by using techniques from certiicate transparency~\cite{Laurie2013CertificateTransparency}; the log entries could also be encrypted using techniques similar to Larch~\cite{dauterman2023larch}.


\section*{Acknowledgments}

This material is based upon work supported by
the National Science Foundation (NSF) under Grant No. 2113819 and No. 2401481. Any opinions, findings, and conclusions or recommendations expressed in this material are those of the author(s) and do not necessarily reflect the views of these institutes.

\section*{Ethical Considerations}

Prezta is a proposed architecture for OT security, and this paper is a first step toward possible future production systems based on that architecture. Any eventual deployment could affect OT operators, administrators, other parties interacting with edge devices, and indirectly the public who rely on critical infrastructure.

We believe Prezta has the potential to improve security and reduce management burden in OT environments. However, deployment of new technology in such environments also introduces risks, including implementation flaws, misconfiguration, new attack surfaces, and possible availability impacts. These tradeoffs are highly deployment-specific, so a full cost-benefit-risk analysis is beyond the scope of this paper.

We judge the risks from publication of this work and code to be minimal. The work proposes a defensive architecture and our implementation is a research prototype rather than a production-ready system. For clarity, we will explicitly label the implementation as unaudited research code.

Like many defensive security technologies, a system that strengthens OT authorization could also make some forms of access or exploitation more difficult for a range of actors, including sophisticated state-backed actors. We nevertheless view improved security and resilience for OT and critical infrastructure as a net benefit, given the broad set of stakeholders who depend on these systems.

\section*{Open Science}
The artifact is permanently archived at \url{https://doi.org/10.5281/zenodo.20303462} (CC BY 4.0). It includes: (1)~the Prezta prototype (RISC Zero zkVM v3.0.3) and (2)~the XACML-to-Rust policy compiler. Source repositories: \url{https://github.com/walotta/ZK_Zero_Trust} and \url{https://github.com/osaidameer/xacml-to-rust}. The implementation is unaudited research code.






\bibliographystyle{abbrv} 
\bibliography{zkp,zkauth,more_refs}

\appendix

\section{Additional Preliminaries}\label{app:prelim}

\subsection{SNARK}

\begin{definition}[Succinct non-interactive argument of knowledge (SNARK)]\label{def:snark} Let $\mathcal{R}$ be a relation with public instance $x$ and private witnesses $w$, a SNARK for $\mathcal{R}$ consists of PPT algorithms ($\mathcal{G}, \mathcal{P}, \mathcal{V}$) with the following properties:

\begin{itemize}
    \item Completeness. For any instance $(x, w)\in \mathcal{R}$

\begin{equation*}    
\Pr \left[ 
\begin{tabular}{ c|c }
 \multirow{2}{6.5em}{$\mathcal{V}(x, \vk, \pi)=1$} & $(\pk, \vk) \gets \mathcal{G}(1^\lambda)$, \\
  & $\pi \gets \mathcal{P}(x, w, \pk)$, \\
\end{tabular}
\right]=1 
\end{equation*}

    \item Knowledge Soundness: for any PPT adversary $\mathcal{A}$, there exists an expected PPT knowledge extractor $\mathcal{E}_\mathcal{A}$ such that the following probability is $\le \mathsf{negl}(\lambda)$:

\begin{equation*}    
\Pr \left[ 
\begin{tabular}{ c|c }
 $\mathcal{V}(x, \vk, \pi^*)=1$
  & $(\pk, \vk) \gets \mathcal{G}(1^\lambda)$, \\
  $ \land (x, w)\notin \mathcal{R}$ & $(\pi^*;w) \gets (\mathcal{A}||\mathcal{E}_\mathcal{A})(x, \pk)$ \\
\end{tabular}
\right].
\end{equation*}

 
\item Succinctness. The proof size $|\pi|$ is sublinear in the size of the relation $\mathcal{R}$ and the witness $|w|$.

\end{itemize}

\end{definition}

In addition to these properties, a zero-knowledge SNARK further satisfies zero-knowledge, which informally says that the proof leaks no information about the witness beyond the fact that the instance is in $\mathcal{R}$. We only use the soundness and the succinctness of a SNARK in this paper and omit the formal definition of zero-knowledge.  


\section{Proof Sketch of Theorem~\ref{thm:main}} \label{app:proof}

The completeness is implied by the correctness of the digital signature scheme and the completeness of the SNARK on the relation $\mathcal{R}$ checking $P(\mathsf{att},Req,\info)= 1$ and $\mathsf{signature}.\verify(\mathsf{att},\sigma_\mathsf{att}, \pk_\mathsf{IdP})=1$.

For security, when $\zkzt.\verify(\pi, Req, \info, \vk_P, \sigma_P,$ $\pk_\mathsf{PA})=1$, by the knowledge soundness of SNARK (\Cref{def:snark}), there exists an extractor that can extract $w = (\mathsf{att}, \sigma_\mathsf{att})$ such that $P(\mathsf{att},Req,\info) \neq 1$ or $\mathsf{signature}.\verify(\mathsf{att},\sigma_\mathsf{att}, \pk_\mathsf{IdP}) \neq 1$ only with negligible probability. Then by the unforgeability of the digital signature, when $\mathsf{signature}.\verify(\mathsf{att},\sigma_\mathsf{att}, \pk_\mathsf{IdP}) = 1$, the probability that $\mathsf{att}\notin Q_\mathsf{att}$ is negligible. Moreover, by Step 2 of $\zkzt.\verify$, $\mathsf{signature}.\verify(\vk_P,\sigma_P, \pk_\mathsf{PA}) = 1$, the probability that $\vk_P \notin Q_P$ is negligible. The by the union bound, the probability that $\zkzt.\verify(\pi, Req, \info, \vk_P, \sigma_P, \pk_\mathsf{PA})=1$ and $P(\mathsf{att},Req,\info)\neq 1$, or $\verify(\mathsf{att}, \sigma_\mathsf{att},\pk_\mathsf{IdP})= 1$ and $\mathsf{att}\notin Q_\mathsf{att}$, or $\verify(\vk_P, \sigma_P,\pk_\mathsf{PA})= 1$ and $\vk_P\notin Q_P$ is negligible, completing the proof of security.

\section{Unsupported XACML Functions}\label{app:unsupported}
\begin{table}[h!]
\centering
\caption{Unsupported XACML Elements, Functions and Features}
{\small
\begin{tabular}{|l|l|}
\hline
\textbf{Name} & \textbf{Type} \\
\hline
x500-* & Functions \\
base64Binary-* & Functions \\
hexBinary-* & Functions \\
rfc822Name-* & Functions \\
dnsName-* & Functions \\
ipAddress-* & Functions \\
*-add-yearMonthDuration & Functions \\
*-subtract-yearMonthDuration & Functions \\
*-add-dayTimeDuration & Functions \\
*-subtract-dayTimeDuration & Functions \\
double-*\{set functions\} & Functions \\
any-of & Function \\
all-of & Function \\
any-of-any & Function \\
any-of-all & Function \\
all-of-any & Function \\
all-of-all & Function \\
map & Function \\
AttributeSelector & Element \\
PolicyIdReference & Element \\
PolicysetIdReference & Element \\
Obligations & Feature \\
Advice & Feature \\
\hline
\end{tabular}
}
\end{table}

\section{Full Policy vs Generated Code}\label{app:codegen}

\begin{listing}[!ht]
\begin{minted}[fontsize=\small, frame=none, breaklines, breakanywhere]{xml}
<Policy PolicyId="policy" RuleCombiningAlgId="deny-overrides">
 <Target/>
  <Rule Effect="Permit" RuleId="rule">
   <Target>
    <AnyOf>
     <AllOf>
      <Match MatchId="string-equal">
       <AttributeValue DataType="#string">Julius Hibbert</AttributeValue>
        <AttributeDesignator AttributeId="subject-id" Category="access-subject" DataType="#string"/>
      </Match>
     </AllOf>
    </AnyOf>
    <AnyOf>
     <AllOf>
      <Match MatchId="anyURI-equal">
       <AttributeValue DataType="#anyURI">http://medico.com/record/patient/BartSimpson</AttributeValue>
       <AttributeDesignator AttributeId="resource-id" Category="resource" DataType="#anyURI"/>
      </Match>
     </AllOf>
    </AnyOf>
    <AnyOf>
     <AllOf>
      <Match MatchId="string-equal">
       <AttributeValue DataType="#string">read</AttributeValue>
       <AttributeDesignator AttributeId="action-id" Category="action" DataType="#string"/>
      </Match>
     </AllOf>
     <AllOf>
      <Match MatchId="string-equal">
       <AttributeValue DataType="#string">write</AttributeValue>
       <AttributeDesignator AttributeId="action-id" Category="action" DataType="#string"/>
      </Match>
     </AllOf>
    </AnyOf>
  </Target>
 </Rule>
</Policy>
\end{minted}
\caption{Simplified Sample XACML Policy}
\label{lst:full-xacml-sample}
\end{listing}

\begin{listing}[!ht]
\begin{minted}[fontsize=\small, frame=none, breaklines, breakanywhere]{rust}
// imports
// jwt verification functions 
#[derive(Debug, PartialEq)]
enum Result {
    Permit,
    Deny,
    NotApplicable,
}
fn evaluate_target_rule(inp: &Inputs) -> bool {
    (("Julius Hibbert" == inp.access_subject_subject_id) && ("http://medico.com/record/patient/BartSimpson" == inp.resource_resource_id) && (("read" == inp.action_action_id) || ("write" == inp.action_action_id)))
}
fn evaluate_rule(inp: &Inputs) -> Result {
    if !evaluate_target_rule(inp) {
        return Result::NotApplicable;
    }
    return Result::Permit;
}
fn evaluate_target_policy(inp: &Inputs) -> bool {
    true // empty target, match all
}
fn evaluate_policy(inp: &Inputs) -> Result {
    if !evaluate_target_policy(inp) {
        return Result::NotApplicable;
    }
    let results = vec![evaluate_rule(inp)];
    //deny-overrides
    let mut atleast_one_permit = false;
    for res in &results {
        if *res == Result::Deny {
            return Result::Deny;
        } else if *res == Result::Permit {
            atleast_one_permit = true;
        }
    }
    if atleast_one_permit {
        return Result::Permit;
    }
    return Result::NotApplicable;
}
fn main() {
    let inp: Inputs = env::read();
    let mut decision = match evaluate_policy(&inp) {
        Result::Permit => true,
        _ => false,
    };
    let jwt: String = env::read();
    let jwt_positions: Vec<usize> = env::read();
    if !extract_jwt(&jwt, &jwt_positions, &inp) {
        decision = false;
    }
    env::commit(&decision);
    env::commit(&inp);
}
\end{minted}
\caption{Rust code generated from policy in Listing \ref{lst:full-xacml-sample}}
\label{lst:full-rust-generated}
\end{listing}


\section{Additional Experimental Results}\label{app:exp}

\paragraph{Policies with large number of user cycles.} We further investigated the cases with more than 100K user cycles in Figure~\ref{fig:end2end-cases-dual-sorted-user}. 
The large number of user cycles are primarily due to their intensive computational patterns during the policy evaluation. Specifically, cases \textbf{IIC150--IIC156} and \textbf{IIC340--IIC349} repeatedly invoke \texttt{IsoDuration::parse} or \texttt{DateTime} parsing functions for each attribute or bag element, resulting in frequent ISO-8601 duration and timestamp conversions. Cases \textbf{IIC201--IIC205} additionally construct and manipulate \texttt{HashSet} objects to perform intersection, union, subset, and equality operations, which significantly increase the number of cycles. The most computationally demanding cases, \textbf{IIC056} and \textbf{IIC057}, involve loading and executing dual DFA-based regular expressions in conjunction with JWT field validation, producing extensive state transitions and branching. Overall, these patterns combine repeated parsing, string processing, and complex set operations, leading to a substantially higher number of user cycles and thus characterizing these cases as representative high-load scenarios in the policy evaluation workload.

\subsection{Comparison with Reef}\label{app:com_reef}

To quantify the improvement of our regex-focused optimizations, we evaluate the six representative regular expressions in the policies from the dataset and compare them with Reef~\cite{angel2024reef}, a SNARK system tailored for regular expressions. Table~\ref{tab:regex-policies-mini} (top) reports the prover time for each regular expression. As shown in the table, even for short strings and simple regular expressions, the naive Regex library in zkVM leads to a significant overhead on the prover time. Our optimization improves the efficiency by one to two orders of magnitude. Compared to Reef, our protocol is even slightly faster for some of the cases, while slightly slower for others. Due to the limited number of examples from the XACML policy datasets, we also create synthetic regex expressions that may occur in authorization policies, such as checks on email, IP address and JWT. See the full expressions in Table~\ref{tab:full_regex}. The performance is reported in Table~\ref{tab:regex-policies-mini} (bottom) as well, and we observe similar comparisons with the naive approach and Reef.


The full regular expressions and inputs corresponding to the names used in Table~\ref{tab:regex-policies-mini} are shown in Table~\ref{tab:full_regex}.

\begin{table}[t]
    \centering
    \caption{Per-policy prover time (in seconds) for regex workloads. Short names are used in the table; the full regex patterns and inputs are listed in Table~\ref{tab:full_regex}.}
    \label{tab:regex-policies-mini}
    
    {\small
    \begin{tabular}{lccc}

        \toprule
        Short & RISC0 (Regex lib) & Prezta & Reef \\
        \midrule
        \texttt{RW\_READ} & 61.5 & 7.2 & 9.0 \\
        \texttt{RW\_DELETE} & 61.6 & 7.3 & 10.8 \\
        \texttt{J\_HIBBERT} & 314.4 & 14.8 & 9.4 \\
        \texttt{B\_SIMPSON} & 509.0 & 14.9 & 11.3 \\
        \texttt{J\_K\_HIBBERT} & 127.3 & 14.7 & 11.6 \\
        \texttt{B\_O\_SIMPSON} & 128.2 & 14.8 & 11.6 \\
        \midrule\midrule
        \texttt{EMAIL\_SIMPLE} & 315.0 & 7.4 & 11.3 \\
        \texttt{EMAIL\_COMPLEX} & 127.3 & 7.5 & 12.3 \\
        \texttt{IPV4} & 126.8 & 14.9 & 13.4 \\
        \texttt{IPV6} & 61.3 & 7.5 & 13.5 \\
        \texttt{JWT\_LIKE} & 129.7 & 7.4 & 11.4 \\
        \texttt{OAUTH\_PATH} & 54.1 & 7.4 & 11.1 \\
        \texttt{ISO8601} & 111.8 & 7.4 & 12.4 \\
        \bottomrule
    
    \end{tabular}
    }
\end{table}

\begin{table*}[h]
    \centering
    \small
    {\small
    \begin{tabular}{@{}lp{0.75\linewidth}@{}}
        \toprule
        \texttt{RW\_READ} & pattern: \lstinline!read|write!; input: \lstinline!read! \\
        \texttt{RW\_DELETE} & pattern: \lstinline!read|write!; input: \lstinline!delete! \\
        \texttt{J\_HIBBERT} & pattern: \lstinline!J.* Hibbert!; input: \lstinline!Julius Hibbert! \\
        \texttt{B\_SIMPSON} & pattern: \lstinline!B.* Simpson!; input: \lstinline!Julius Hibbert! \\
        \texttt{J\_K\_HIBBERT} & pattern: \lstinline!J.* K.* Hibbert!; input: \lstinline!Julius Hibbert! \\
        \texttt{B\_O\_SIMPSON} & pattern: \lstinline!B.* O.* Simpson!; input: \lstinline!Julius Hibbert! \\
        \texttt{EMAIL\_SIMPLE} & pattern: \lstinline!^[a-zA-Z0-9._%+-]+@[a-zA-Z0-9.-]+!\\
        & \lstinline!.[a-zA-Z]{2,}$!; input: \lstinline!user@example.com! \\
        \texttt{EMAIL\_COMPLEX} & pattern: \lstinline!^[a-zA-Z0-9._%+-]+@[a-zA-Z0-9.-]+!\\
        & \lstinline!\\.[a-zA-Z]{2,}$!; input: \lstinline!very.unusual.@.unusual-example@domains.example.com! \\
        \texttt{IPV4} & pattern: \lstinline!^((25[0-5]|2[0-4][0-9]|[01]?[0-9][0-9]?).){3}!\\
        & \lstinline!(25[0-5]|2[0-4][0-9]|[01]?[0-9][0-9]?)$!; input: \lstinline!192.168.1.1! \\
        \texttt{IPV6} & pattern: \lstinline!^(?:[0-9a-fA-F]{1,4}:){7}[0-9a-fA-F]{1,4}$!\\
        & \lstinline!|^::1$|^::$!; input: \lstinline!2001::1! \\
        \texttt{JWT\_LIKE} & pattern: \lstinline!^ey[A-Za-z0-9+/=]+\.[ey][A-Za-z0-9+/=]+!\\
        & \lstinline!\.[A-Za-z0-9+/=]*$!; input: \lstinline!not.a.jwt.token! \\
        \texttt{OAUTH\_PATH} & pattern: \lstinline!^/oauth/(authorize|token|revoke)$!; input: \lstinline!/oauth/authorize! \\
        \texttt{ISO8601} & pattern: \lstinline!^([0-9]{4})-([0-9]{2})-([0-9]{2})T([0-9]{2}):([0-9]{2})!\\
        & \lstinline!:([0-9]{2})(.[0-9]+)?(Z|([+-][0-9]{2}):([0-9]{2}))\$!; input: \lstinline!2025-04-05T10:15:30.123+05:30! \\
        \bottomrule
    \end{tabular}
    }
    \caption{Full regex patterns and inputs.}\label{tab:full_regex}
\end{table*}

\end{document}
